\newcommand{\Nat}{{\mathbb N}}
\newcommand{\bi}{\chi_b}
\newcommand{\bhom}{\xrightarrow{b}}
\newcommand{\Gl}{G[K_\ell]}
\newcommand{\Gil}[1]{G_{#1}[K_\ell]}
\journalname{Graphs and Combinatorics}
\begin{document}

\title{On the b-continuity of the lexicographic product of graphs
\thanks{All authors are members of the ParGO Research Group - Parallelism, Graphs and Optimization. \\ This work was partially supported by CNPq and CAPES, Brazil.}
}

\author{Cl\'audia Linhares Sales \and
        Leonardo Sampaio \and
        Ana Silva
}


\institute{Departamento de Computa\c{c}\~ao, Universidade Federal do Cear\'a, Brazil. 
              \email{linhares@lia.ufc.br}           
           \and
           Centro de Ci\^encias, Universidade Estadual do Cear\'a, Brazil. 
            \email{leonardo.sampaio@uece.br}
            \and
            Departamento de Matem\'atica, Universidade Federal do Cear\'a, Brazil. 
            \email{ana.silva@mat.ufc.br}
}

\date{Received: date / Accepted: date}

\maketitle

\begin{abstract}
A b-coloring of the vertices of a graph is a proper coloring where each color class contains a vertex which is adjacent to each other color class. The b-chromatic number of $G$ is the maximum integer $\bi(G)$ for which $G$ has a b-coloring with $\bi(G)$ colors. A graph $G$ is b-continuous if $G$ has a b-coloring with $k$ colors, for every integer $k$ in the interval $[\chi(G),\bi(G)]$. It is known that not all graphs are b-continuous. Here, we investigate whether the lexicographic product $G[H]$ of b-continuous graphs $G$ and $H$ is also b-continuous. Using homomorphisms, we provide a new lower bound for $\bi(G[H])$, namely $\bi(G[K_t])$, where $t=\bi(H)$, and prove that if $G[K_\ell]$ is b-continuous for every positive integer $\ell$, then $G[H]$ admits a b-coloring with $k$ colors, for every $k$ in the interval $[\chi(G[H]),\bi(G[K_t])]$. We also prove that $G[K_\ell]$ is b-continuous, for every positive integer $\ell$, whenever $G$ a $P_4$-sparse graph, and we give further results on the b-spectrum of $G[K_\ell]$, when $G$ is chordal.
\keywords{b-chromatic number\and b-continuity\and b-homomorphism \and chordal graphs\and $P_4$-sparse graphs}
\end{abstract}
 
\section{Introduction}\label{intro}

Given a simple graph $G$\footnote{The graph terminology used in this paper follows \cite{BM08}.}, and a function $c: V(G)\rightarrow\{1, \cdots, k\}$, we say that $c$ is a \emph{proper coloring of $G$ with $k$ colors} if $c(u) \neq c(v)$ for every $uv \in E(G)$. A value $i\in \{1,\cdots,k\}$ is called \emph{color $i$}, while the subset $c^{-1}(i)$ is called \emph{color class $i$}.
Graph colorings are a very useful model for situations in which a set of objects is to be partitioned according to some prescribed rules.
For example, problems of \emph{scheduling}~\cite{Werra.85}, \emph{frequency assignment}~\cite{Gamst.86}, \emph{register allocation}~\cite{Chow.Hennessy.84,Chow.Hennessy.90}, and the \emph{finite element method}~\cite{Saad.96}, are naturally modelled by colourings.
In these applications, one is interested in finding a proper coloring with the smallest number of colors.
This motivates the definition of the \emph{chromatic number} of $G$, denoted $\chi(G)$, the smallest integer $k$ for which $G$ has a proper coloring with $k$ colors.
Deciding if a graph admits a proper colouring with $k$ colours is an NP-complete problem, even if $k$ is not part of the input~\cite{Hol81}.
The chromatic number is also hard to approximate: for all $\epsilon > 0$, there is no algorithm that approximates the chromatic number within a factor of $n^{1 - \epsilon}$ unless P = NP~\cite{Has96,Zuc07}.

One approach to obtain proper colorings of a graph is to use coloring heuristics.
Consider a proper coloring $c$ of graph $G$, for which we want to reduce the number of colors.
A vertex $v$ in color class $i$ is called a \emph{b-vertex of color $i$} if $v$ has at least one neighbor in color class $j$, for every $j\neq i$.
If color $i$ has no $b$-vertices, we may recolor each $v$ in color class $i$ with some color that does not appear in the neighborhood of $v$.
In this way, we eliminate color $i$, and obtain a new coloring for $G$ that uses $k - 1$ colors.
The procedure may be repeated until we reach a coloring in which every color contains a $b$-vertex.
Such a coloring is called a \emph{$b$-coloring}. Clearly, the described procedure cannot decrease the number of colors used in a proper coloring of $G$ with $\chi(G)$ colors. Therefore, we are actually interested in investigating the worst-case scenario for the described procedure. 
This motivates the definition of the \emph{$b$-chromatic number} of a graph $G$, denoted $\bi(G)$, being the largest $k$ such that $G$ has a $b$-coloring with $k$ colors.

This concept was introduced by Irving and Manlove in~\cite{IM99}, where they prove that  determining the $b$-chromatic number of a graph is an NP-complete problem. In fact, it remains so even when restricted to bipartite graphs~\cite{KTV02}, connected chordal graphs~\cite{HLS11}, and line graphs~\cite{CLMSSS15}.

In~\cite{KTV02} it is proved that $K'_{p, p}$, the graph obtained from $K_{p,p}$ by removing a perfect matching, admits $b$-colorings only with $2$ or $p$ colors.
And in~\cite{BCF07}, the authors prove that, for every finite $S\subset \Nat-\{1\}$, there exists a graph $G$ that admits a b-coloring with $k$ colors if and only if $k\in S$.
These facts motivate the definition of b-spectrum and of b-continuous graphs, introduced in~\cite{BCF03}.
The \emph{b-spectrum} of a graph $G$, denoted by $S_b(G)$, is the set containing every positive value $k$ for which $G$ admits a b-coloring with $k$ colors; and $G$ is said to be \emph{b-continuous} if $S_b(G)$ contains every integer in the closed interval $[\chi(G),\bi(G)]$. In the same article, they prove that interval graphs are $b$-continuous. This result was generalized for chordal graphs independently in~\cite{F04} and~\cite{KKV.04}.
Other examples of b-continuous graphs are cographs and $P_4$-sparse graphs~\cite{BDM+09}, as well as the more general class of $P_4$-tidy graphs~\cite{VBK10}; Kneser graphs $K(n,2)$ for $n\ge 17$~\cite{JaOm09}; regular graphs with girth at least~6 and not containing cycles of length~7~\cite{BK.12}; and, more recently, graphs with girth at least~10~\cite{SLS.16}. 

Given graphs $G=(V,E)$ and $H=(V,E)$, the \emph{lexicographic product } of $G$ by $H$ is the graph $G[H] = (V',E')$, where 
$V' = V(G) \times V(H)$ and $(x,y)(x',y')\in E'$ if and only if either $x=x'$ and $yy' \in E(H)$ or 
$xx' \in E(G)$.  Intuitively, $G[H]$ is the graph obtained from $G$ and $H$ by replacing each vertex of $G$ with a copy of $H$ and adding every possible edge between two copies of $H$ if  and only if the corresponding vertices of $G$ are adjacent. 
For every $x \in V(G)$, we denote the copy of $H$ related to $x$ in $G[H]$ by $x[H]$.
The $b$-chromatic number of the lexicographic product of graphs was studied in~\cite{JaPe12}.
In~\cite{LVS15}, which was co-authored by the authors, they considered the following question, which we continue to investigate in this paper.

\begin{question}\label{mainquestion}
Is it true that $G[H]$ is b-continuous whenever $G$ and $H$ are?
\end{question}

We mention that a similar question is answered in the negative for the cartesian product and strong product of graphs. For the cartesian product, it suffices to observe that the cube $Q_3$, which is known to be non-b-continuous, can also be viewed as the cartesian product of $C_4$ by $C_4$. As for the strong product, consider $G=K_2$ and $H=K_n$, $n>3$. Observe that $G \times H$ is isomorphic to $K'_{n,n}$, and as mentioned before, the $b$-spectrum of $K'_{n,n}$ is $\{2,n\}$.

In~\cite{JaPe12}, the authors show that $\bi(G[H])\ge \bi(G)\bi(H)$, while in~\cite{GS75} it is shown that $\chi(G[H]) = \chi(G[K_{\chi(H)}]) \le \chi(G)\chi(H)$. Therefore, if $G[H]$ is b-continuous, then the closed interval $[\chi(G)\chi(H),\bi(G)\bi(H)]$ must be contained in the b-spectrum of $G[H]$. In~\cite{LVS15}, the authors show that this is the case when $\bi(H)>\chi(H)$.

\begin{theorem}[\cite{LVS15}]\label{LVS15}
If $G$ and $H$ are b-continuous and $\bi(H) > \chi(H)$, then $[\chi(G)\chi(H),\bi(G)\bi(H)]\subseteq S_b(G[H])$.
\end{theorem}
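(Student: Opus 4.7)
The plan is to express each integer $k$ in the target range as $k = \sum_{i=1}^{a} b_i$ with $a \in [\chi(G), \bi(G)]$ and each $b_i \in [\chi(H), \bi(H)]$, and then to inflate a b-coloring of $G$ with $a$ colors into a b-coloring of $G[H]$ with $k$ colors by reserving a block of $b_i$ colors for the copies of $H$ sitting over the $i$-th class of $G$. The hypothesis $\bi(H) > \chi(H)$ is what unlocks the arithmetic: it forces $[\chi(H), \bi(H)]$ to contain at least two consecutive integers, so that for each fixed $a$ every integer in $[a\chi(H), a\bi(H)]$ is realisable as a valid sum $\sum b_i$, and consecutive ranges start overlapping as soon as $a(\bi(H) - \chi(H)) \geq \chi(H)-1$.

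With such a decomposition in hand, I would use b-continuity of $G$ to fix a b-coloring $c_G$ of $G$ with $a$ colors, set $X_i = c_G^{-1}(i)$, and split $\{1,\ldots,k\}$ into consecutive blocks $C_1,\ldots,C_a$ of sizes $b_1,\ldots,b_a$. For each $x \in V(G)$, b-continuity of $H$ lets me colour the copy $x[H]$ with a b-coloring of $H$ that uses exactly the colors of $C_{c_G(x)}$, and I glue all these copies into a single coloring $\varphi$ of $G[H]$. Properness of $\varphi$ is automatic because adjacent copies of $H$ in $G[H]$ carry disjoint palettes, and the total number of colors used is $\sum b_i = k$. To see that $\varphi$ is a b-coloring, fix $j \in C_i$, take a b-vertex $x^*$ of $c_G$ lying in $X_i$ and a b-vertex $y$ of color $j$ inside $x^*[H]$; then $(x^*, y)$ sees every color of $C_i \setminus \{j\}$ through its $H$-neighbors in $x^*[H]$ and every color of every $C_{i'}$ with $i' \neq i$ through the copies over the neighbors of $x^*$ in $X_{i'}$, each of which is painted with all of $C_{i'}$.

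The hard part is the arithmetic itself, that is, showing that every $k$ in the target interval admits the decomposition above. When $\chi(G)$ is small relative to $\chi(H)$ and $\bi(H) - \chi(H)$ is small, the intervals $[a\chi(H), a\bi(H)]$ for the smallest admissible values of $a$ are short and mutually separated, and the union over admissible $a$ may leave gap values of $k$ uncovered by the rigid partition construction. I would expect to close these gaps by a more flexible construction that mixes two b-colorings of $G$ at adjacent levels $a$ and $a+1$: some colors are given supports that are classes of the $a$-level b-coloring while the rest are given supports that are classes of the $(a+1)$-level one, and the hosting b-vertices in $G[H]$ are chosen at vertices of $G$ that are simultaneously b-vertices of both colorings. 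Establishing this compatibility, and arranging the per-vertex palette sizes so that every $|D_x|$ stays in $[\chi(H), \bi(H)]$ while the total count equals $k$, is where I anticipate the real technical effort of the proof concentrating.
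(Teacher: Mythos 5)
The paper itself does not prove this statement; it is quoted from~\cite{LVS15}, so there is no in-paper proof to compare against. Judged on its own terms, your proposal is incomplete, and you say so yourself: the ``hard part'' you defer is exactly where the theorem lives. Your block construction correctly produces a b-coloring with $k$ colors for every $k$ in $\bigcup_{a=\chi(G)}^{\bi(G)}[a\chi(H),a\bi(H)]$ (the properness and b-vertex verifications are fine), and you correctly observe that this union misses the range $(a\bi(H),(a+1)\chi(H))$ whenever $a(\bi(H)-\chi(H))<\chi(H)-1$. These gaps are real: with $\chi(G)=2$, $\bi(G)=3$, $\chi(H)=10$, $\bi(H)=11$, the values $k\in\{23,\dots,29\}$ admit no decomposition $k=\sum_{i=1}^a b_i$ with $a\in[\chi(G),\bi(G)]$ and $b_i\in[\chi(H),\bi(H)]$, so the theorem cannot be reduced to that arithmetic.

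The fix you gesture at---mixing classes of two b-colorings of $G$ at levels $a$ and $a+1$---does not work as described. Each copy $x[H]$ must receive a single palette $P_x$, and adjacent copies must receive disjoint palettes; if some colors are supported on classes of the $a$-coloring and others on classes of the $(a+1)$-coloring, then a vertex $x$ inherits colors from both of its blocks, and nothing forces the palettes of two adjacent vertices to be disjoint (the two colorings need not refine one another), nor is there any reason the two colorings share b-vertices. The mechanism that actually closes the gap is different: one must let \emph{non-adjacent} copies of $H$ share part of their palettes, so that the total number of colors can sit strictly between $a\bi(H)$ and $(a+1)\chi(H)$, and one must allow some copies to carry proper colorings of $H$ that are not b-colorings, with each color's b-vertex hosted in a copy whose own palette has size in $S_b(H)$ and whose closed $G$-neighborhood's palettes together cover all $k$ colors. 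For instance, in $P_5[H]$ with $\chi(H)=10$ and $\bi(H)=11$, the value $k=25$ is realized by palettes $\{1,\dots,11\}$ on $v_3$, $\{12,\dots,22\}$ on $v_2$, $\{15,\dots,25\}$ on $v_4$ (note the overlap between the non-adjacent $v_2$ and $v_4$), and suitable $10$-element palettes on the leaves; no per-class block structure is present. Designing and verifying such overlapping palette systems for a general b-continuous $G$ is the substantive content of the theorem, and it is absent from your proposal.
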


In~Section~\ref{sec:bhom}, we apply the concept of b-homomorphism to show that $\bi(G[H])$ is in fact at least $\bi(G[K_t])$, where $t=\bi(H)$. This improves the lower bound $\bi(G)\bi(H)$ given in~\cite{JaPe12}. We also show that if $G[K_x]$ is b-continuous for every integer $x$, then $S_b(G[H])$ contains every integer in the closed interval $[\chi(G[H]), \bi(G[K_t])]$, which by what is said before, contains the interval $[\chi(G)\chi(H), \bi(G)\bi(H)]$. This shows that an important step to answer Question \ref{mainquestion} is to first answer the particular case below. Observe that this is complementary to Theorem~\ref{LVS15}.

\begin{question}\label{question:GKxBcontinuous}
Let $G$ be a b-continuous graph and $\ell$ be any positive integer. Is $G[K_\ell]$ b-continuous?
\end{question}

In~\cite{LVS15}, the authors have answered this question positively for chordal graphs, and in~Section~\ref{sec:P4sparse}, we show that this is also the case for $P_4$-sparse graphs. Below, we show that if the roles of $G$ and $K_\ell$ are reversed, then the answer is yes.


\begin{theorem}\label{teo:Kn[H]} If $H$ is a $b$-continuous graph, then $K_\ell[H]$ is $b$-continuous, for every positive integer $\ell$.
\end{theorem}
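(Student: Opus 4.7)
The approach is to exploit the fact that $K_\ell[H]$ is the join of $\ell$ disjoint copies of $H$: any two vertices lying in distinct copies are adjacent. From this, I will derive a complete description of the b-colorings of $K_\ell[H]$, and then use the b-continuity of $H$ to fill in the intermediate values.

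First I would establish the following characterization: a proper $k$-coloring of $K_\ell[H]$ is a b-coloring if and only if there exist integers $a_1,\ldots,a_\ell\in S_b(H)$ with $a_1+\cdots+a_\ell=k$, and each copy $i[H]$ receives a b-coloring of $H$ with $a_i$ colors, the $\ell$ palettes being pairwise disjoint. The ``only if'' direction uses the join structure: since every inter-copy edge is present, two vertices sharing a color must lie in the same copy, so the palettes $C_1,\ldots,C_\ell$ are pairwise disjoint and each restriction is a proper coloring of $H$. The ``if'' direction uses the dual observation that any vertex $v\in i[H]$ automatically has a neighbor of every color in $C_j$ for $j\neq i$; hence $v$ is a b-vertex of $K_\ell[H]$ exactly when it is a b-vertex of its color class in the restricted coloring of $i[H]$. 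As an immediate corollary, $\chi(K_\ell[H])=\ell\,\chi(H)$ and $\bi(K_\ell[H])=\ell\,\bi(H)$.

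To conclude, for any $k\in [\ell\,\chi(H),\ell\,\bi(H)]$ it suffices to exhibit integers $a_1,\ldots,a_\ell\in S_b(H)$ summing to $k$, since the characterization then produces a b-coloring of $K_\ell[H]$ with exactly $k$ colors. Because $H$ is b-continuous, $S_b(H)=\{\chi(H),\chi(H)+1,\ldots,\bi(H)\}$. Writing $r=k-\ell\,\chi(H)\in[0,\ell(\bi(H)-\chi(H))]$ and distributing $r$ greedily as $r=r_1+\cdots+r_\ell$ with $0\le r_i\le \bi(H)-\chi(H)$, the values $a_i=\chi(H)+r_i$ do the job. No step presents a real obstacle; the only point requiring some care is verifying the ``only if'' direction of the characterization, that is, that in any b-coloring the induced coloring of each copy of $H$ must itself be a b-coloring.
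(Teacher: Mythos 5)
Your proposal is correct. The key structural fact is the same one the paper uses: because any two vertices in distinct copies of $H$ are adjacent, every color class of a proper coloring of $K_\ell[H]$ lives inside a single copy, and the restriction of a b-coloring to a copy is a b-coloring of $H$ (a b-vertex of color $c$ sees the other colors of its own palette through its $H$-neighbors, and the foreign palettes for free). Where you diverge is in the packaging. The paper argues top-down: given a b-coloring with $k>\chi(K_\ell[H])$ colors, some copy carries $k'>\chi(H)$ colors, its restriction is a b-coloring of $H$ with $k'$ colors, and b-continuity of $H$ lets one drop to $k'-1$ colors there, hence to $k-1$ overall. You argue bottom-up, first proving the full characterization $S_b(K_\ell[H])=\{a_1+\cdots+a_\ell \mid a_i\in S_b(H)\}$ and then realizing each target $k$ by a greedy distribution of $k-\ell\chi(H)$. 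Your version is slightly more work but buys more: it describes the entire b-spectrum of $K_\ell[H]$ in terms of that of $H$ without assuming $H$ is b-continuous, and it makes explicit the verification (that the restriction to each copy is itself a b-coloring) that the paper's proof passes over with the single remark that the colors of $v[H]$ occur only in $v[H]$. Both proofs are sound.
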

\begin{proof}
Denote $K_\ell$ by $G$. First observe that $\chi(G[H]) = |V(G)|(\chi(H))$. Now, take any $b$-coloring $c$ of $G[H]$ with $k$ colors, $k > \chi(G[H])$. Then, for some $v \in V(G)$, $v[H]$ is colored with $k' > \chi(H)$ colors.
Observe as well that the colors of $v[H]$ only occur in $v[H]$. Therefore, $c$ restricted to
$v[H]$ is a $b$-coloring with $k'$ colors, and since $H$ is $b$-continuous, it can be turned into a $b$-coloring with $k'-1$-colors, to produce a $b$-coloring of $G[H]$ with $k-1$ colors.
\end{proof}

In Section~\ref{sec:chordal}, we further investigate the b-spectrum of $G[H]$, when $G$ is chordal. We prove that if $G$ is chordal, $H$ is b-continuous and $k$ is an integer in the interval $[\chi(G[H]),\bi(G[H])]$ such that $G[H]$ does not have a b-coloring with $k$ colors, then $k$ is in the open interval $(\bi(G[K_t]), n_H\chi(G))$, where $t=\bi(H)$ and $n_H=\lvert V(H)\rvert$. If this interval is empty, it means that $G[H]$ is b-continuous. Therefore, a good question is the following:

\begin{question}
What are the graphs $G$ and $H$ such that $G$ is chordal, $H$ is b-continuous and $\bi(G[K_{\bi(H)}])\ge \lvert V(H)\rvert \chi(G)$?
\end{question}


\section{b-Homomorphism}\label{sec:bhom}

In this section, we investigate the concept of b-homomorphism. This will help us obtaining a new lower bound for $\bi(G[H])$, and showing that $S_b(G[H])$ contains the integers of the interval $[\chi(G[H]), \bi(G[K_{\bi(H)}])]$, whenever $H$ is b-continuous and $G[K_\ell]$ is b-continuous, for every $\ell$. We refer the reader to~\cite{HN.book} for an overview on graph homomorphisms.

Given any function $f:A\rightarrow B$ and a subset $A'\subseteq A$, we denote by $f(A')$ the set $\{f(a)\mid a\in A'\}$. Given graphs $G$ and $H$, and a function $f:V(G)\rightarrow V(H)$, we say that $f$ is a \emph{homomorphism from $G$ to $H$} if $f(u)f(v)\in E(H)$ for every $uv\in E(G)$; and that $f$ is a \emph{b-homomorphism from $G$ to $H$} if it is a homomorphism from $G$ to $H$ and, for every $x\in V(H)$, there exists $u\in f^{-1}(x)$ such that $f(N(u)) = N(x)$. If such a function exists, we write $G\bhom H$. Note that $f$ is surjective by definition.

\begin{proposition}\label{prop:transitivity}
If $F \bhom G$ and $G\bhom H$, then $F\bhom H$.
\end{proposition}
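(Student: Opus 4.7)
The plan is to compose the two b-homomorphisms and verify directly from the definition. Let $f:V(F)\to V(G)$ witness $F\bhom G$ and $g:V(G)\to V(H)$ witness $G\bhom H$. I propose to show that $h := g\circ f : V(F)\to V(H)$ is itself a b-homomorphism.

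First I would dispatch the routine part: $h$ is a homomorphism. If $uv\in E(F)$, then $f(u)f(v)\in E(G)$ because $f$ is a homomorphism, and in turn $g(f(u))g(f(v))\in E(H)$ because $g$ is a homomorphism; surjectivity of $h$ follows from surjectivity of $g$ and $f$.

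The core step is to verify the b-vertex condition for $h$. Fix an arbitrary $y\in V(H)$. Using that $g$ is a b-homomorphism from $G$ to $H$, pick some $x\in g^{-1}(y)$ with $g(N(x)) = N(y)$. Now apply the corresponding property of $f$ at this specific $x\in V(G)$: there exists $u\in f^{-1}(x)$ with $f(N(u)) = N(x)$. Then $h(u) = g(f(u)) = g(x) = y$, so $u\in h^{-1}(y)$, and moreover
\[
h(N(u)) \;=\; g\bigl(f(N(u))\bigr) \;=\; g(N(x)) \;=\; N(y),
\]
as required.

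I do not anticipate a real obstacle here; the only subtlety is that the definition of b-homomorphism demands equality $f(N(u))=N(x)$ (not just containment), so one must be careful to propagate the equality through both compositions rather than settling for the easy inclusion $h(N(u))\subseteq N(y)$ that one would already get from $h$ being a homomorphism into a simple loopless graph. Once the witness $u$ is chosen by applying the two b-homomorphism conditions in the correct order (first at $y$, then at the resulting $x$), the chain of equalities above closes the argument.
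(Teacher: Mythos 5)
Your proof is correct and follows essentially the same route as the paper's: compose the two maps, note the composition is a homomorphism, and then for each $y\in V(H)$ choose the witness in $G$ via the second b-homomorphism and the witness in $F$ via the first, chaining the two equalities $f(N(u))=N(x)$ and $g(N(x))=N(y)$. No gaps.
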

\begin{proof}
Let $f_1,f_2$ be homomorphisms from $F$ to $G$ and $G$ to $H$, respectively, and let $f = f_1\circ f_2$ (composition function of $f_1$ and $f_2$). It is known that $f$ is a homomorphism from $F$ to $H$, therefore we just need to prove that it is also a b-homomorphism. So, let $w\in V(H)$ be any vertex. Then, there exists $v\in f_2^{-1}(w)$ such that $f_2(N(v)) = N(w)$; in turn, there exists $u\in f_1^{-1}(v)$ such that $f_1(N(u)) = N(v)$. Therefore, $u\in V(F)$ is such that $f_2(f_1(N(u)) = f_2(N(v)) = N(w)$, as desired.
\end{proof}

\begin{lemma}\label{lem:lexHom}
If $F\bhom H$, then $G[F] \bhom G[H]$ and $F[G]\bhom H[G]$, $\forall G$.
\end{lemma}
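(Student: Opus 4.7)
The plan is to use the obvious coordinate-wise liftings of the given b-homomorphism $f : V(F)\to V(H)$. For the first assertion, define $g : V(G[F]) \to V(G[H])$ by $g(x,u) = (x, f(u))$, and for the second, define $h : V(F[G])\to V(H[G])$ by $h(u,x) = (f(u), x)$. In both cases, surjectivity of $g$ and $h$ is immediate from surjectivity of $f$, and the fact that they are graph homomorphisms follows mechanically from unfolding the adjacency rule of the lexicographic product: an edge of $G[F]$ either comes from an edge of $F$ inside a single copy (and is preserved because $f$ is a homomorphism) or from an edge of $G$ in the outer coordinate (and is preserved because the outer coordinate is untouched), and symmetrically for $F[G]$.

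The substantive part is checking the b-homomorphism condition. I would fix an arbitrary target vertex, say $(x,w)\in V(G[H])$, and use the hypothesis that $f$ is a b-homomorphism to pick $u\in f^{-1}(w)$ with $f(N_F(u)) = N_H(w)$. The claim is that $(x,u)$ witnesses the b-homomorphism condition for $(x,w)$. To see this, split $N_{G[F]}(x,u)$ into the inner part $\{(x,u') : uu'\in E(F)\}$ and the outer part $\{(x',u') : xx'\in E(G),\ u'\in V(F)\}$. Applying $g$, the inner part maps onto $\{(x,w') : w'\in f(N_F(u))\} = \{(x,w') : ww'\in E(H)\}$ by choice of $u$, while the outer part maps onto $\{(x',w') : xx'\in E(G),\ w'\in V(H)\}$ since $f$ is surjective. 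Together these give exactly $N_{G[H]}(x,w)$.

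The argument for $F[G] \bhom H[G]$ is entirely analogous: for a target $(w,x)$, pick the same $u\in f^{-1}(w)$ with $f(N_F(u)) = N_H(w)$, use $(u,x)$ as the witness, and partition $N_{F[G]}(u,x)$ into the part coming from edges of $G$ (which maps identically on the second coordinate and to $w$ on the first) and the part coming from edges of $F$ (which maps onto $N_H(w)\times V(G)$ using the b-homomorphism property of $f$ and surjectivity along the $G$ coordinate).

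I do not foresee a real obstacle; the only point one has to be attentive about is invoking both pieces of the hypothesis on $f$ in the right place, namely the equality $f(N_F(u)) = N_H(w)$ to recover the neighbors within a copy, and surjectivity of $f$ to recover all neighbors coming from an outer edge. Once the witness $(x,u)$ (respectively $(u,x)$) is chosen, everything reduces to a direct set-equality check.
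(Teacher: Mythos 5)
Your proposal is correct and follows essentially the same route as the paper: the same coordinate-wise liftings $(x,u)\mapsto(x,f(u))$ and $(u,x)\mapsto(f(u),x)$, the same choice of witness via $u\in f^{-1}(w)$ with $f(N_F(u))=N_H(w)$, and the same two-case analysis of the lexicographic neighborhood (inner edges handled by the b-homomorphism property of $f$, outer edges by its surjectivity). No gaps.
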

\begin{proof}
Let $f$ be a b-homomorphism from $F$ to $H$ and define $f':V(G[F])\rightarrow V(G[H])$ as $f'(u,v) = (u, f(v))$. We prove that $f'$ is a b-homomorphism. First, note that if $(u,v)(u',v')$ is an edge in $G[F]$, then either $u\neq u'$ and $uu'\in E(G)$, in which case $(u,f(v))(u',f(v'))$ is an edge in $G[H]$, or $u = u'$ and $vv'\in E(F)$, in which case $f(v)f(v')\in E(H)$ and $(u,f(v))(u',f(v'))$ is an edge in $G[H]$. Therefore, $f'$ is a homomorphism.

Now, denote $G[F]$ by $F'$ and $G[H]$ by $H'$. Consider $(x,y)\in V(H')$, and let $v\in f^{-1}(y)$ be such that $f(N_F(v)) = N_H(y)$. We want to prove that $f'(N_{F'}(x,v)) = N_{H'}(x,y)$. Note that $f'(N_{F'}(x,v))\subseteq N_{H'}(x,y)$ since $f'$ is a homomorphism. Therefore, it remains to prove that $N_{H'}(x,y) \subseteq f'(N_{F'}(x,v))$, i.e., we need to prove that for each $(x',y')\in N_{H'}(x,y)$, there exists $(x'',v')\in N_{F'}(x,v)$ such that $f'(x'',v') = (x',y')$. 
So, consider any $(x',y') \in N_{H'}(x,y)$. Then, either $x\neq x'$ and $xx'\in E(G)$, or $x=x'$ and $yy'\in E(H)$. If the latter occurs, by the choice of $v$, there must exist $v'\in N_F(v)$ such that $f(v') = y'$; hence, $(x,v')$ is the desired vertex. If the former occurs, because $f$ is surjective, there must exist $v'\in V(F)$ such that $f(v') = y'$. Then, $(x',v')\in N_{F'}(x,v)$ and $f'(x',v') = (x',y')$, as desired.

Now we want to prove the second part of the lemma. So now let $F',H'$ denote $F[G],H[G]$, respectively, and consider $f'$ defined as $f'(u,v) = (f(u), v)$. If $(u,x)(v,y)\in F'$, then either $u=v$ and $xy\in E(G)$, or $u\neq v$ and $uv\in E(F)$. If the former occurs, then $f'(u,x)f'(u,y) = (f(u),x)(f(u),y)$ is also an edge in $H'$; and if the latter occurs, then $f(u)f(v)\in E(H)$ and again $f'(u,x)f'(v,y) = (f(u),x)(f(v),y)$ is also an edge in $H'$. Therefore, $f'$ is an homomorphism and it remains to show that it is a b-homomorphism.

So, consider any $(x,y)\in V(H')$, and let $v\in f^{-1}(x)$ be such that $f(N_F(v)) = N_H(x)$. We prove that $f'(N_{F'}(v,y)) = N_{H'}(x,y)$. For this, consider any $(u,w)\in N_{H'}(x,y)$. Then one of two cases occurs. If $u = x$ and $wy\in E(G)$, then we know that $(v,w) \in N_{F'}(v,y)$ and clearly $f'(v,w) = (x,w) = (u,w)$. So consider $u\neq x$ and $ux\in E(H)$. Then, by the choice of $v$, there must exist $u'\in N_F(v)$ such that $f(u') = u$. Therefore, $(u',w) \in N_{F'}(v,x)$ is such that $f'(u',w) = (u,w)$.
\end{proof}

It is easy to verify that the following holds.

\begin{proposition}\label{prop:bhomKm}
$H\bhom K_m$ if and only if $H$ has a b-coloring with $m$ colors.
\end{proposition}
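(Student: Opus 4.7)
The plan is to unpack the definition of b-homomorphism in the special case where the target is a complete graph, since then both the homomorphism condition and the b-vertex condition collapse into very familiar coloring statements.

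First I would set up the identification $V(K_m) = \{1, \ldots, m\}$, so that any function $f \colon V(H) \to V(K_m)$ is literally a coloring of $H$. The key observation to keep in mind throughout is that in $K_m$ every vertex $x$ has $N(x) = V(K_m) \setminus \{x\}$, so the condition $f(N(u)) = N(x)$ says exactly that the neighborhood of $u$ meets every color class other than its own.

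For the forward direction, assume $f \colon V(H) \to V(K_m)$ is a b-homomorphism. Since $K_m$ has no loops, the homomorphism property $f(u)f(v) \in E(K_m)$ for every $uv \in E(H)$ forces $f(u) \neq f(v)$, so $f$ is a proper coloring; surjectivity (which Proposition~\ref{prop:bhomKm} inherits from the definition, as the paper notes) guarantees that exactly $m$ colors are used. For the b-vertex condition, for each color $x \in \{1,\ldots,m\}$ the definition provides $u \in f^{-1}(x)$ with $f(N(u)) = N(x) = \{1,\ldots,m\} \setminus \{x\}$, i.e.\ $u$ is a b-vertex of color $x$. Hence $f$ is a b-coloring of $H$ with $m$ colors.

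For the converse, I would start from a b-coloring $c \colon V(H) \to \{1,\ldots,m\}$ and define $f := c$ viewed as a map into $V(K_m)$. Properness of $c$ gives $f(u) \neq f(v)$ whenever $uv \in E(H)$, and since any two distinct vertices of $K_m$ are adjacent, this is precisely the homomorphism condition. Using $m$ colors gives surjectivity. Finally, for each color $x$ the existence of a b-vertex $u$ of color $x$ means $u$ has a neighbor in every other color class, which translates to $f(N(u)) \supseteq N(x)$; the reverse inclusion is automatic from the homomorphism property, so equality holds and $f$ is a b-homomorphism.

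There is no real obstacle here: the whole proof is a dictionary between the two definitions, with the only substantive point being the identification $N(x) = V(K_m) \setminus \{x\}$ that makes ``$u$ sees every other color'' equivalent to ``$f(N(u)) = N(x)$''. I would therefore keep the write-up short and present both directions in parallel, emphasizing this identification.
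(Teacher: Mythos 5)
Your argument is correct and complete; the paper itself states this proposition without proof (``It is easy to verify\ldots''), and your write-up is exactly the routine verification that is intended, hinging on the identification $N(x)=V(K_m)\setminus\{x\}$. Nothing to add.
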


The next lemma and corollary show the importance of Question \ref{question:GKxBcontinuous}.

\begin{lemma}\label{cor:relations}
Consider any $G, H$, and let $h = \chi(H)$, $g = \chi(G)$, $p = \bi(H)$ and $q=\bi(G)$. Then, the following hold:
\begin{enumerate}[1.]
  \item $\bi(G[H])\ge \bi(G[K_p]) \ge p\cdot q = \bi(K_q[H])$; 
  \item $\chi(G[H]) = \chi(G[K_h]) \le g\cdot h = \chi(K_g[H]) \le p\cdot q$; and
  \item $\bigcup_{x\in S_b(H)}S_b(G[K_x]) \subseteq S_b(G[H])$.
\end{enumerate}  
\end{lemma}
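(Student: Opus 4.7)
The plan is to prove each of the three items via the b-homomorphism machinery built above, namely Proposition~\ref{prop:bhomKm} (which converts between b-colorings and b-homomorphisms into complete graphs), Lemma~\ref{lem:lexHom} (which lifts b-homomorphisms through the lexicographic product), and Proposition~\ref{prop:transitivity} (which composes them). For item~1, since $p=\bi(H)$ and $q=\bi(G)$, Proposition~\ref{prop:bhomKm} yields $H\bhom K_p$ and $G\bhom K_q$. Feeding $H\bhom K_p$ into the first conclusion of Lemma~\ref{lem:lexHom} gives $G[H]\bhom G[K_p]$, so composing with any b-coloring of $G[K_p]$ and invoking transitivity produces a b-coloring of $G[H]$ with the same number of colors, establishing $\bi(G[H])\ge\bi(G[K_p])$. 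Feeding $G\bhom K_q$ into the second conclusion of Lemma~\ref{lem:lexHom} gives $G[K_p]\bhom K_q[K_p]=K_{pq}$, so $\bi(G[K_p])\ge pq$ by Proposition~\ref{prop:bhomKm}.

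The equality $\bi(K_q[H])=pq$ is the one point that does not fall straight out of the calculus above, and is the place I would spend the most care. The lower bound $\bi(K_q[H])\ge pq$ comes from the same chain applied to $K_q$: $H\bhom K_p$ lifts to $K_q[H]\bhom K_q[K_p]=K_{pq}$. For the upper bound I would argue directly from the join structure: in $K_q[H]$, any two distinct copies of $H$ are completely adjacent, so in any proper coloring the palettes used on different copies are pairwise disjoint, and in a b-coloring the restriction to a single copy $x[H]$ must be a b-coloring of $H$ on that palette (each b-vertex for one of those colors lies in $x[H]$ and must see the remaining colors of the palette inside $x[H]$). Hence each copy contributes at most $\bi(H)=p$ colors, for a total of at most $qp$.

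Item~2 is essentially bookkeeping. The first equality $\chi(G[H])=\chi(G[K_h])$ is exactly the Geller--Stahl theorem cited from~\cite{GS75}. The inequality $\chi(G[K_h])\le gh$ follows by combining the ordinary homomorphism $G\to K_g$ with the natural homomorphism $G[K_h]\to K_g[K_h]=K_{gh}$. The equality $gh=\chi(K_g[H])$ again uses that distinct copies of $H$ in $K_g[H]$ are completely joined, forcing $g$ pairwise disjoint palettes of size at least $\chi(H)=h$. The final $gh\le pq$ is just $\chi(G)\le \bi(G)$ and $\chi(H)\le \bi(H)$.

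For item~3, I would fix $x\in S_b(H)$, so that $H$ admits a b-coloring with $x$ colors and hence $H\bhom K_x$ by Proposition~\ref{prop:bhomKm}. Lemma~\ref{lem:lexHom} then yields $G[H]\bhom G[K_x]$. For any $k\in S_b(G[K_x])$, Proposition~\ref{prop:bhomKm} gives $G[K_x]\bhom K_k$, and Proposition~\ref{prop:transitivity} then produces $G[H]\bhom K_k$, showing $k\in S_b(G[H])$. The main obstacle throughout the proof is, as noted, the direct verification of $\bi(K_q[H])\le qp$; every other inequality reduces to a mechanical application of the three tools above.
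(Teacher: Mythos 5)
Your proof is correct and follows essentially the same route as the paper: the new inequalities ($\bi(G[H])\ge\bi(G[K_p])$ and item~3) via Proposition~\ref{prop:bhomKm}, Lemma~\ref{lem:lexHom} and Proposition~\ref{prop:transitivity}, and the $K_q[H]$ equalities via the disjoint-palette argument for completely joined copies. The only (harmless) difference is that you re-derive $\bi(G[K_p])\ge pq$ and $\chi(G[K_h])\le gh$ from the homomorphism machinery, where the paper simply cites the known bounds $\bi(F[J])\ge\bi(F)\bi(J)$ and $\chi(F[J])\le\chi(F)\chi(J)$.
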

\begin{proof}
Recall that $\bi(F[J]) \ge \bi(F)\bi(J)$~\cite{JaPe12}; this explains the second inequality in~(1). Also, recall that $\chi(F[J]) = \chi(F[K_{\chi(J)}]) \le \chi(F)\chi(J)$~\cite{GS75}, which explains the first equality and first inequality in (2). 
Now, consider any positive integer $\ell$, and let $c$ be a proper coloring of $H' = K_\ell[H]$. Note that the colors used in any pair of distinct copies of $H$ in $H'$ are disjoint. Therefore we get $\chi(K_{\ell}[H]) = \ell\chi(H)$ and $\bi(K_\ell[H]) = \ell\bi(H)$. This explains the equality in (1) and the second equality in (2). Also, because $\chi(F)\le \bi(F)$ for every graph $F$, we get: $\chi(K_g[H]) \le \bi(K_g[H]) = g\bi(H) \le \bi(G)\bi(H) = p\cdot q$. This explains the last inequality in (2).

Now, we prove the first inequality in (1). Let $p' = \bi(G[K_p])$. By Proposition \ref{prop:bhomKm}, we get $H\bhom K_p$, and $G[K_p]\bhom K_{p'}$; by Lemma \ref{lem:lexHom}, we get $G[H]\bhom G[K_p]$; and by Proposition \ref{prop:transitivity}, we get $G[H] \bhom K_{p'}$. Therefore, by Proposition \ref{prop:bhomKm}, we get $\bi(G[H]) \ge p'$. 

It remains to prove (3). Let $x\in S_b(H)$ and $y\in S_b(G[K_x])$. By Proposition \ref{prop:bhomKm}, $H\bhom K_x$ and $G[K_x] \bhom K_y$, and by Lemma \ref{lem:lexHom} and Proposition~\ref{prop:transitivity}, $G[H] \bhom K_y$. Hence, $y\in S_b(G[H])$.
\end{proof}

The next corollary easily follows from Lemma  \ref{cor:relations}.

\begin{corollary}\label{cor:spectrum}
If $G[K_x]$ is b-continuous, for every positive integer $x$, and $H$ is b-continuous, then (below, $t$ denotes $\bi(H)$)
\[[\chi(G[H]),\bi(G[K_t])]\subseteq S_b(G[H]).\]
\end{corollary}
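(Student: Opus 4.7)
The plan is to derive the corollary from Lemma~\ref{cor:relations} by realising the target interval $[\chi(G[H]),\bi(G[K_t])]$ as a union of smaller intervals already known to lie inside $S_b(G[H])$.

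First, the two b-continuity hypotheses turn all the relevant spectra into closed ranges of integers: since $H$ is b-continuous, $S_b(H)=\{h,h+1,\dots,t\}$ with $h=\chi(H)$; and since each $G[K_x]$ is b-continuous, $S_b(G[K_x])=\{\chi(G[K_x]),\dots,\bi(G[K_x])\}$. Applying part~(3) of Lemma~\ref{cor:relations} immediately gives
\[
\bigcup_{x=h}^{t}\bigl[\chi(G[K_x]),\bi(G[K_x])\bigr]\ \subseteq\ S_b(G[H]).
\]
By part~(2), the left endpoint of this union is $\chi(G[K_h])=\chi(G[H])$, and the term $x=t$ already contains $\bi(G[K_t])$. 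Together with the nondecreasingness of $\chi(G[K_x])$ and $\bi(G[K_x])$ in $x$, the union sits inside $[\chi(G[H]),\bi(G[K_t])]$.

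The final step is to show that the union actually equals the target interval, which reduces to the overlap inequality $\chi(G[K_{x+1}])\le\bi(G[K_x])+1$ for each $x\in\{h,\dots,t-1\}$. My approach would be a deletion argument: from a $\chi(G[K_{x+1}])$-coloring of $G[K_{x+1}]$, remove from each copy of $K_{x+1}$ either the vertex realising a colour that is absent from some copy or, if every colour is present in every copy, an arbitrary vertex; the resulting subgraph is isomorphic to $G[K_x]$ and is properly coloured with at most $\chi(G[K_{x+1}])-1$ colours. Combining this with the b-continuity of $G[K_x]$ (and, if necessary, with the lower bound $\bi(G[K_x])\ge x\cdot\bi(G)$ extracted inside the proof of part~(1)) should place $\chi(G[K_{x+1}])-1$ inside $[\chi(G[K_x]),\bi(G[K_x])]$, so that no integer between consecutive sub-intervals is left uncovered.

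The main obstacle I foresee is precisely this last tightening: b-continuity of $G[K_x]$ only tells us $S_b(G[K_x])$ is a contiguous block starting at $\chi(G[K_x])$, and the deletion argument on its own supplies a proper coloring, not yet a b-coloring. Upgrading it to a b-coloring of $G[K_x]$ with exactly $\chi(G[K_{x+1}])-1$ colours, so that the upper endpoint $\bi(G[K_x])$ is demonstrably at least $\chi(G[K_{x+1}])-1$, is where the interplay between b-continuity and the lower bound from part~(1) of Lemma~\ref{cor:relations} has to do its real work.
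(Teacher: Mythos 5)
Your reduction to the overlap inequality $\chi(G[K_{x+1}])\le\bi(G[K_x])+1$ is where the argument breaks: that inequality is simply false in general, so no deletion argument (or any other argument) can establish it. Take $G=K_2$: then $G[K_x]=K_{2x}$, so $\bi(G[K_x])+1=2x+1$ while $\chi(G[K_{x+1}])=2x+2$. Concretely, with $H=P_5$ (b-continuous, $\chi(H)=2$, $\bi(H)=3$), part~(3) of Lemma~\ref{cor:relations} yields only $S_b(K_4)\cup S_b(K_6)=\{4,6\}$, whereas the corollary claims $\{4,5,6\}\subseteq S_b(K_2[P_5])$. The value $5$ is genuinely not in any $S_b(G[K_x])$; it arises from a b-coloring of $K_2[P_5]$ in which the two copies of $P_5$ receive b-colorings with \emph{different} numbers of colors ($3$ and $2$). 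So the missing integers cannot be recovered by sharpening the endpoints of the sub-intervals --- they require colorings that do not factor through any single homomorphism $G[H]\bhom G[K_x]$.

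To be fair, the paper itself only asserts that the corollary ``easily follows'' from Lemma~\ref{cor:relations}, and your union over part~(3) is the natural reading of that; but on its own it is insufficient. The honest completion chains three intervals, each already known to lie in $S_b(G[H])$: (a) $[\chi(G[H]),\bi(G[K_h])]=S_b(G[K_h])$ with $h=\chi(H)$, whose right endpoint is at least $h\,\bi(G)\ge\chi(G)\chi(H)$ by the bound $\bi(F[J])\ge\bi(F)\bi(J)$; (b) $[\chi(G)\chi(H),\bi(G)\bi(H)]$, supplied by Theorem~\ref{LVS15} when $\bi(H)>\chi(H)$ (note $G=G[K_1]$ is b-continuous by hypothesis; if $\bi(H)=\chi(H)$ the whole target interval is just $S_b(G[K_t])$ and part~(3) alone suffices); and (c) $[\chi(G[K_t]),\bi(G[K_t])]=S_b(G[K_t])$, whose left endpoint satisfies $\chi(G[K_t])\le\chi(G)\,t\le\bi(G)\bi(H)$. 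These overlap and their union is $[\chi(G[H]),\bi(G[K_t])]$. The ingredient your proposal lacks is precisely the mixed-coloring mechanism behind Theorem~\ref{LVS15}.
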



Given a graph $H$, let $x$ denote the value $2\lvert V(H)\rvert +\Delta(H) +1$. In~\cite{JaPe12}, the authors prove that if $n\ge 2x+1$, then $\bi(P_n[H]) = x$. Therefore, for any $k\ge 5$ and $n\ge 4k+7$, we have that $\bi(P_n[P_k]) = 2k+3$, while $\bi(K_3[P_k]) = \bi(P_n[K_3]) = 9$. This tells us that $\bi(G[H])$ can be arbitrarily larger than $\bi(G[K_{\bi(H)}])$.  Nevertheless, we give an example where $\bi(G[K_\ell])$ is strictly larger than $\ell\bi(G)$, showing that our lower bound improves the best previously known lower bound for $\bi(G[H])$, namely $\bi(G)\bi(H)$~\cite{JaPe12}. For this, consider the tree $T$ obtained from the $P_5$, $(v_1,v_2,x,v_3,v_4)$, by adding one pendant leaf at $v_2$, one at $v_3$, two at $v_1$ and two at $v_4$. It is not hard to verify that $\bi(T) = 3$. On the other hand, one can verify that the precoloring $\{(3,4), (1,2), (3,6), (4,7), (5,6)\}$ of $P_5[K_2]$ can be completed into a b-coloring of $T[K_2]$ with 7 colors, thus showing that $\bi(T[K_2]) > 2\bi(T)$. 

Corollary~\ref{cor:spectrum} shows the importance of knowing the value $\bi(G[K_t])$. In~\cite{IM99}, the authors introduce an upper bound for $\bi(G)$. Observe that if $G$ has a b-coloring with $k$ colors, then $G$ has at least $k$ vertices with degree at least $k-1$, namely the b-vertices. Therefore, if $m(G)$ is the largest $k$ for which $G$ has at least $k$ vertices with degree at least $k-1$, then $\bi(G)\le m(G)$. Observe that $m(G)$ can be easily computed by ordering the vertices of $G$ according to their degrees in a non-increasing way. As a consequence of the following proposition, we get that the distance $\bi(G[K_\ell]) - \ell\bi(G)$ is at most $\ell(m(G) - \bi(G))$. 

%
%

\begin{proposition}
Let $G$ be any graph and let $\ell$ be any positive integer. Then, $$m(G[K_\ell])= \ell m(G).$$ 
\end{proposition}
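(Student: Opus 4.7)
The plan is to reduce everything to a single degree formula and then exploit the fact that degrees in $G[K_\ell]$ depend only on the underlying $G$-vertex.

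First, I would compute the degree of an arbitrary vertex of $\Gl$. A vertex $(u,i)\in V(\Gl)$ is adjacent to the $\ell-1$ other vertices in its own copy of $K_\ell$, and to all $\ell$ vertices in each copy corresponding to a neighbor of $u$ in $G$. Hence
\[d_{\Gl}(u,i)=\ell(d_G(u)+1)-1,\]
which depends only on $u$ and not on $i$. In particular, all $\ell$ copies of a vertex $u\in V(G)$ have the same degree in $\Gl$, so when we order the vertices of $\Gl$ in non-increasing order of degree they appear in $\ell$-blocks, one block per vertex of $G$.

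For the lower bound $m(\Gl)\ge \ell\, m(G)$, I would pick $m(G)$ vertices of $G$ of degree at least $m(G)-1$. Each of their $\ell$ copies in $\Gl$ has degree $\ell(d_G(u)+1)-1 \ge \ell\,m(G)-1$, yielding $\ell\,m(G)$ vertices in $\Gl$ whose degree is at least $\ell\,m(G)-1$, so $m(\Gl)\ge \ell\,m(G)$ by definition.

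For the reverse inequality, set $K=m(\Gl)$ and sort $V(G)$ as $u_1,u_2,\dots$ in non-increasing order of $d_G$. By the block structure, the $K$-th vertex of $\Gl$ (in non-increasing degree order) lies in the block of $u_{\lceil K/\ell\rceil}$, and its degree is at least $K-1$. Translating through the degree formula gives $d_G(u_{\lceil K/\ell\rceil})\ge K/\ell - 1$, and since the left side is an integer this forces $d_G(u_{\lceil K/\ell\rceil})\ge \lceil K/\ell\rceil - 1$. Thus $G$ has $\lceil K/\ell\rceil$ vertices of degree at least $\lceil K/\ell\rceil - 1$, so $m(G)\ge \lceil K/\ell\rceil$, whence $\ell\,m(G)\ge K = m(\Gl)$. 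Combining both inequalities yields the equality.

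The only real bookkeeping concern is the ceiling in the upper bound argument, but it is resolved immediately by the integrality of $d_G$, so I do not expect any substantive obstacle.
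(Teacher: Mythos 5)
Your proof is correct and follows essentially the same route as the paper: both arguments rest on the degree formula $d_{\Gl}(u,i)=\ell d_G(u)+\ell-1$ and elementary counting, with the lower bound handled identically. Your upper bound is phrased slightly differently (extracting $\lceil K/\ell\rceil$ witnesses for $m(G)$ from the sorted order, rather than bounding the number of vertices of degree at least $\ell m$ as the paper does), but the two are equivalent computations and your ceiling/integrality step is handled correctly.
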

\begin{proof}
Denote $G[K_\ell]$ by $G'$ and $m(G)$ by $m$. First, we prove that there are at least $\ell m$ vertices of degree at least $\ell m-1$ in $G'$. For this, let $D$ be the subset of vertices of $G$ with degree at least $m-1$. By the definition of $m(G)$, there are at least $m$ such vertices. Also, for each $u\in D$ and each $v\in V(K_\ell)$, we have $d(u,v) = \ell d(u) + \ell - 1 \ge (m-1)\ell + \ell - 1 = \ell m - 1$. Therefore, the set $\bigcup_{u\in D}\{(u,v)\mid v\in V(K_\ell)\}$ contains the desired vertices.

Now, we prove that there are at most $\ell m$ vertices of degree at least $\ell m$, which implies that $m(G')$ cannot exceed $\ell m$. For this, just consider any $(u,v)\in V(G')$ such that $d_{G'}(u,v)\ge \ell m$. Since $d_{G'}(u,v) = \ell d(u) + \ell -1$, we get that $d(u) \ge \frac{1}{\ell}\left( \ell m - \ell +1 \right) \ge m - 1 + \frac{1}{\ell} > m-1$. Therefore, each such vertex in $G'$ defines a vertex of $G$ with degree at least $m$. By the definition of $m(G)$, one can see that there are at most $m$ such vertices in $G$, which implies that there are at most $\ell m$ vertices of degree at least $\ell m$ in $G'$.
\end{proof}


%

In the following sections, we investigate the answer to Question \ref{question:GKxBcontinuous} restricted to some known b-continuous classes.


\section{$P_4$-sparse graphs} \label{sec:P4sparse}

In this section, we prove the following theorem.

\begin{theorem}\label{thm:P4sparse}
Let $G$ be a $P_4$-sparse graph and $\ell$ be any positive integer. Then, $G[K_\ell]$ is b-continuous.
\end{theorem}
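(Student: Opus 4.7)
The plan is to prove Theorem~\ref{thm:P4sparse} by induction on $|V(G)|$, using the classical modular decomposition of $P_4$-sparse graphs: any $P_4$-sparse graph on at least two vertices is either (i) a disjoint union $G_1\cup G_2$, (ii) a join $G_1+G_2$, or (iii) a spider (thin or thick) whose head is a $P_4$-sparse graph. The base case is immediate since $G[K_\ell]=K_\ell$ when $G$ is a single vertex, and $K_\ell$ is trivially b-continuous.

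The lexicographic product distributes in its first coordinate over both disjoint union and join, so in case (i) we have $G[K_\ell]=G_1[K_\ell]\cup G_2[K_\ell]$ and in case (ii) we have $G[K_\ell]=G_1[K_\ell]+G_2[K_\ell]$; in both cases the inductive hypothesis gives b-continuity of each piece. It then suffices to show that b-continuity is preserved under these two operations. For joins the argument is short: colours cannot be shared between the two sides, so $\chi$ and $\bi$ are additive over joins and every b-coloring of $H_1+H_2$ restricts to a b-coloring on each side; conversely, for any partition $j=j_1+j_2$ with $j_i\in[\chi(H_i),\bi(H_i)]$, combining b-colorings of the $H_i$ on $j_i$ colors (supplied by b-continuity) gives a b-coloring of $H_1+H_2$ on $j$ colors. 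The disjoint union case is more delicate, but can be handled by adapting the arguments already used in~\cite{BDM+09} to prove b-continuity of $P_4$-sparse graphs.

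The substantive case is (iii). Write $V(G)=S\cup K\cup R$ with $|S|=|K|=k\geq 2$, $S$ independent, $K$ a clique, and a bijection $s_i\leftrightarrow k_i$ prescribing the thin or thick structure; the head $R$ is $P_4$-sparse, completely adjacent to $K$ and non-adjacent to $S$. In $G[K_\ell]$ each $s_i$ expands to a clique $S_i$ of size $\ell$, each $k_i$ to a clique $K_i$ of size $\ell$, and $K^{*}=K_1\cup\cdots\cup K_k$ is a clique of size $k\ell$ completely adjacent to $R[K_\ell]$, while the bipartite adjacencies between the $S_i$'s and the $K_j$'s follow the thin/thick pattern. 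For each variant I would compute $\chi(G[K_\ell])$ and $\bi(G[K_\ell])$ directly, then for every integer $j$ in this interval construct an explicit b-coloring with $j$ colors. The construction would colour $K^{*}$ with a block of consecutive colors, colour $R[K_\ell]$ using the induction hypothesis (since $R[K_\ell]$ is b-continuous and so can be coloured with any admissible number of colors), and use the $S_i$'s as a tunable reservoir: colors appearing only in the $S_i$'s can be added or removed one at a time without disturbing the b-vertices built inside $K^{*}\cup R[K_\ell]$, allowing us to cover the whole spectrum step by step.

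The main obstacle is the spider case, and especially the thick variant, where each $S_i$ is adjacent to all of $K^{*}\setminus K_i$. This tightly constrains which colors can appear inside each $S_i$ and makes both the computation of $\bi(G[K_\ell])$ and the reservoir argument noticeably more delicate than in the thin case.
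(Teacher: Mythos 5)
Your proposal is a plan whose two hardest cases are left unproved, and one of them rests on a claim that is not available. The critical gap is the disjoint-union case: your induction needs ``if $G_1[K_\ell]$ and $G_2[K_\ell]$ are b-continuous, then so is $G_1[K_\ell]\cup G_2[K_\ell]$,'' but b-continuity is not known to be closed under disjoint union, and the set $S_b(H_1\cup H_2)$ is not determined by $S_b(H_1)$ and $S_b(H_2)$ alone: a b-vertex of colour $i$ in the union must see all $k-1$ other colours \emph{inside its own component}, so whether a b-colouring with $k$ colours exists depends on how the $k$ colours can be distributed and witnessed across components, not merely on which cardinalities each component supports in isolation. The reference you invoke does not prove such a closure lemma; it (and the present paper) instead runs a global descent over the whole decomposition tree. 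Concretely, the paper fixes a b-colouring of $G[K_\ell]$ with $k$ colours and maintains the invariant of a \emph{miss-1-b-colouring} (every colour other than a fixed colour~1 retains a witness adjacent to all colours except possibly~1), eliminating colour~1 leaf by leaf of a minimal primeval decomposition and, when elimination is blocked, shrinking the graph via the c-, s- and p-reductions until either colour~1 vanishes (giving a b-colouring with $k-1$ colours) or a clique of size $k$ is exposed (showing $\chi(G[K_\ell])=k$). That top-down invariant is precisely the device that replaces the union-closure statement your bottom-up composition would require.

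The spider case in your write-up is likewise only a sketch: you propose to compute $\chi(G[K_\ell])$ and $\bi(G[K_\ell])$ for each variant and to use the sets $S_i$ as a ``tunable reservoir,'' and you yourself flag the thick variant as unresolved. The difficulty you sense is real: when every colour occurring on $S[K_\ell]$ also occurs on $C[K_\ell]$ together with the head, no local recolouring frees colour~1, and the paper must delete $S$ entirely (the p-reduction) and recurse, afterwards showing how to lift a $(k-1)$-colouring back. Your join case is essentially correct (colour classes cannot cross a join, so the spectrum is the sumset of the two spectra), but on its own it does not carry the induction. As it stands the proposal does not constitute a proof of the theorem.
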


We mention that in~\cite{BDM+09}, the authors prove that $P_4$-sparse are b-continuous. Our proof generalizes theirs, since it also holds when $\ell = 1$. It is also worth mentioning that $P_4$-tidy graphs, a superclass of $P_4$-sparse graphs, are also b-continuous~\cite{VBK10}. A good question is whether our result can be generalized to $P_4$-tidy graphs.

Before we proceed, we need some definitions.
Consider a graph $G$; we say that $G$ is \emph{complete} if $E(G)$ contains every possible edge, and that $G$ is \emph{empty} if $E(G)$ is empty. Let $X\subseteq V(G)$; the subset $X$ is called a \emph{clique} if $G[X]$ is complete, and it is called a \emph{stable set} if $G[X]$ is empty. A \emph{matching} in $G$ is a collection of pairwise non-adjacent edges, while an \emph{antimatching} in $G$ is a matching in $\overline{G}$ (complement graph of $G$). Given disjoint subsets of vertices $U,W\subseteq V(G)$, we say that $U$ is \emph{complete to $W$} if every possible edge between $U$ and $W$ exists in $G$, and that $U$ is \emph{anti-complete to $W$} if $U$ is complete to $W$ in $\overline{G}$. 

Given disjoint graphs $G_1$ and $G_2$, the \emph{union of $G_1$ and $G_2$} is the graph $(V(G_1)\cup V(G_2), E(G_1)\cup E(G_2))$, while the \emph{join of $G_1$ and $G_2$} is obtained from their union by adding every possible edge between $G_1$ and $G_2$. 
Finally, let $C$ and $S$ be the complete and empty graphs on $n$ vertices, respectively, $C\cap S=\emptyset$, and add either a matching or an anti-matching between $C$ and $S$. Also, let $R$ be any subset disjoint from both $K$ and $S$. The \emph{spider operation} applied to $(C,S)$ and $R$ is obtained by adding every possible edge between the sets $R$ and $C$. The obtained graph is called a \emph{spider} and we say that $R$ is the \emph{head} of the spider. If $R=\emptyset$, we say that \emph{$(C,S)$ is a spider with empty head}. The following decomposition theorem is an important tool in our proof.

\begin{theorem}[\cite{Ho95,JaOl95}]
\label{thm:decp4}
If $G$ is a non-trivial $P_4$-sparse graph, then exactly one of the following holds:
\begin{enumerate}
\item $G$ is the union of two $P_4$-sparse graphs; or
\item $G$ is the join of two $P_4$-sparse graphs; or
\item $G$ is a spider whose head is either empty or a $P_4$-sparse graph.
\end{enumerate}
\end{theorem}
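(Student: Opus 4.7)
The plan is to prove this decomposition via the standard \emph{modular decomposition} trichotomy, matching the three cases of the statement to the three possibilities ``disconnected / co-disconnected / prime.'' Before starting, I would record two hereditary facts: (i) every induced subgraph of a $P_4$-sparse graph is $P_4$-sparse, which is immediate from the definition; and (ii) the complement $\overline{G}$ of a $P_4$-sparse graph $G$ is $P_4$-sparse, since $P_4$ is self-complementary, so any $5$-vertex subset of $G$ contains exactly as many induced $P_4$s as the corresponding $5$-subset of $\overline{G}$.

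If $G$ is disconnected, then taking one component $G_1$ and the union $G_2$ of the rest expresses $G$ as the union of two $P_4$-sparse graphs by (i), yielding case~1. If instead $G$ is connected but $\overline{G}$ is disconnected, applying the previous argument to $\overline{G}$ (using (ii)) and complementing back gives $G$ as the join of two $P_4$-sparse graphs, yielding case~2. We may henceforth assume that both $G$ and $\overline{G}$ are connected on at least two vertices. Since cographs are exactly the $P_4$-free graphs and are never both connected and co-connected when non-trivial, $G$ must contain an induced $P_4$.

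The main work is case~3: show that such a $G$ is a spider. The strategy is to partition $V(G)$ according to the role each vertex plays in induced $P_4$s. Let
\begin{align*}
S &= \{\, v \in V(G) : v \text{ is an endpoint of some induced } P_4 \text{ in } G \,\},\\
C &= \{\, v \in V(G) : v \text{ is a midpoint of some induced } P_4 \text{ in } G \,\},\\
R &= V(G) \setminus (C \cup S).
\end{align*}
I would then verify, using the $P_4$-sparsity condition to constrain $5$-vertex subsets, the following five claims:
(a) $C \cap S = \emptyset$, so this is a genuine partition;
(b) $C$ is a clique and $S$ is a stable set;
(c) each $s \in S$ admits a unique ``partner'' $\pi(s) \in C$ such that every induced $P_4$ of $G$ having $s$ as an endpoint has $\pi(s)$ as its adjacent midpoint, and $\pi \colon S \to C$ is a bijection;
(d) the edges of $G$ between $C$ and $S$ form either the perfect matching $\{\, s\,\pi(s) : s \in S \,\}$ or its complement (an antimatching); and
(e) $R$ is complete to $C$ and anti-complete to $S$. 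Once (a)--(e) are established, $G$ is by definition a spider with head $G[R]$, and (i) forces $G[R]$ to be $P_4$-sparse.

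The main obstacle will be claims (c) and (d): the uniqueness and global consistency of the pairing between $C$ and $S$. The driving observation is that if two induced $P_4$s share enough vertices but disagree about which endpoint is adjacent to which midpoint, then one can extract a $5$-vertex subset carrying two distinct induced $P_4$s, contradicting $P_4$-sparsity. By iterating such exchange arguments across all $P_4$s and all pairs in $C \times S$, one forces the rigid matching-or-antimatching structure; claims (b) and (e) then follow from the same style of case check (e.g., a non-edge in $C$ together with its $P_4$-witnesses produces two $P_4$s on five vertices). Finally, mutual exclusivity of the three conclusions is automatic, since the case selected is determined by whether $G$ is disconnected, co-disconnected, or both connected and co-connected.
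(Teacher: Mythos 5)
The paper does not prove this theorem at all: it is quoted as a known result of Ho\`ang and of Jamison--Olariu, so there is no in-paper argument to compare against. Judging your proposal on its own, the outer skeleton (disconnected $\Rightarrow$ union; co-disconnected $\Rightarrow$ join; otherwise show a spider) and the two hereditary facts are fine, but the execution of case~3 has two genuine gaps. First, claim~(c) is false for thick spiders: if each $s_i$ is adjacent to all of $C$ except $c_i$ and $\lvert C\rvert\ge 3$, the induced $P_4$s ending at $s_i$ are exactly $s_i\,c_j\,c_i\,s_j$ for $j\neq i$, so the midpoint \emph{adjacent} to $s_i$ is $c_j$ and varies with $j$; no single $\pi(s_i)$ plays the role you describe (the invariant partner is the unique non-neighbour $c_i$, i.e.\ the far midpoint). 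Any uniqueness argument built on ``adjacent midpoint'' will break exactly in the antimatching case.

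Second, and more seriously, defining $C$ and $S$ globally as the midpoints and endpoints of \emph{all} induced $P_4$s of $G$ does not recover the spider partition when the head is nontrivial --- and the theorem explicitly allows the head to be an arbitrary $P_4$-sparse graph, hence one containing induced $P_4$s. Take a thin spider with $C=\{c_1,c_2\}$, $S=\{s_1,s_2\}$ and head $R$ inducing a $P_4$ $r_1r_2r_3r_4$: this graph is $P_4$-sparse, but your $S$ becomes $\{s_1,s_2,r_1,r_4\}$ and your $C$ becomes $\{c_1,c_2,r_2,r_3\}$, your $R$ is empty, and the bipartite graph between your $C$ and $S$ is neither a matching nor an antimatching ($s_1$ has one neighbour in your $C$ while $r_1$ has three), so (d) fails; with head $2P_4$ even (b) fails, since the four middle vertices coming from two different components of the head are not pairwise adjacent. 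The standard repair is to first split off the head as a module --- i.e.\ pass to the quotient by maximal modules (or, equivalently, restrict the role analysis to a single $p$-connected component) and prove that the \emph{prime} quotient of a connected, co-connected $P_4$-sparse graph is a headless or one-vertex-headed spider; only at that level is your role-based partition valid. As written, the proposal proves the trichotomy only for graphs whose head happens to be $P_4$-free of a rather restricted kind, which is not the statement.
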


Given a $P_4$-sparse graph $G$, let $(T, t, {\cal X})$ be a tuple, where $T$ is a rooted tree having $I$ as internal vertices, $t$ is a function $t: I \rightarrow \{union,\ join,\ spider\}$, and ${\cal X} = \{X_j\}_{j\in V(T)\setminus I}$ associates to each leaf $i$ of $T$ a subset $X_i\subseteq V(G)$, disjoint from $X_j$ for every $j\neq i$, and such that $X_i$ is either a clique, or a stable set, or induces a spider with empty head. We say that $D = (T, t, {\cal X})$ is a \emph{primeval decomposition of $G$} if $G$ can be constructed from ${\cal X}$ by searching the tree in an upward way and applying the operation defined by the label on the respective internal node of $T$. We suppose that $D$ has a minimal number of internal vertices, i.e., if $x\in V(T)$ is such that $t(x) = union$, and $x$ is adjacent to leaves $y,z$, then at least one between $X_y$ and $X_z$ is not empty as otherwise the decomposition obtained from $D$ by removing $y,z$ and relating $x$ to $X_y\cup X_z$ is also a primeval decomposition of $G$. A similar argument can be done when $t(x) = join$ and $X_y,X_z$ are complete. For each node $i\in T$, we denote by $D^i = (T^i, t^i, {\cal X}^i)$ the tuple $D$ restricted to the subtree of $T$ rooted at $i$. Observe that $D^i$ itself is a primeval decomposition of $G^i$, the subgraph formed by ${\cal X}^i$. We mention that if $i$ is an internal node or $X_i$ is not a spider, then $V(G^i)$ is a module in $G$, i.e., for every $v\in V(G)\setminus V(G^i)$, we get that $v$ is either complete or anti-complete to $V(G^i)$.

In our proof, we start with a b-coloring $\psi$ of $G[K_\ell]$ and iteratively try to remove a fixed color, namely color~1, from some $X_i[K_\ell]$, where $i$ is a leaf in the primeval decomposition of $G$. While doing this, we allow for the b-vertices to lose color~1 in their neighborhoods. Therefore, if at the end no more vertex is colored with~1, then the obtained coloring is a b-coloring of $G[K_\ell]$ with $k-1$ colors. However, removing color~1 from a leaf is not always possible. When this happens, we restrict our attention to a subgraph $G'$ of $G$ such that $\psi$ restricted to $G'[K_\ell]$ is also a b-coloring with $k$ colors. If eventually we arrive at a clique and color~1 cannot be removed, we get that $\chi(G[K_\ell])\ge \omega(G[K_\ell]) = \lvert G'[K_\ell]\rvert=k$; hence no b-coloring with $k-1$ colors can exist and we are done. Before we proceed to our proof, we need some further definitions that tell us what is an acceptable coloring and how we can reduce the subgraph being investigated.

Consider any graph $H$ and let $c$ be a proper coloring of $V(H)$. We say that $u\in V(H)$ is a $b^*$-vertex in $c$ if $c(u)\neq 1$ and $u$ is adjacent to a vertex colored with $i$, for every color $i$ distinct from $1$ and $c(u)$; and we say that $c$ is a \emph{miss-1-b-coloring} of $H$ if there exists a $b^*$-vertex colored with $i$, for every color $i$ distinct from~1. Note that $b^*$-vertices are not necessarily non-adjacent to~$1$; therefore any b-vertex of $c$ satisfies the condition, i.e., a b-coloring with $k$ colors is also a miss-1-b-coloring. In addition, note that a miss-1-b-coloring where no vertex is colored with color 1 is a b-coloring with $k-1$ colors. In our proof, starting with a b-coloring with $k$ colors, we manipulate miss-1-b-colorings until color 1 disappears, or until we can prove that $k=\omega(\Gl)$. 

Now, consider a minimal primeval decomposition $D = (T, t, {\cal X})$ of $G$. Given a leaf $i$ of $T$, let $p_i$ be the parent node of $i$ in $T$. We know that $p_i$ is an internal node labeled with one of the operations in $\{union, join, spider\}$, which are binary operations. This means that $p_i$ has exactly one other child different from $i$, say $j$, and that, when constructing $G$, the subgraph $G[X_i]$ is operated with the subgraph $G^j$ by applying the operation $t(p_i)$. We introduce some reduction operations on $G$ that allow us to restricted our attention to a subgraph $G^*$ of $G$. Below, we show that $D$ can be easily adapted to a minimal primeval decomposition of $G^*$. We implicitly use these decompositions in the proof.
\begin{itemize}
\item If $X_i$ is a clique and $t(p_i)\neq join$, we say that $G^*$ is a \emph{c-reduction of $G$} if it is obtained from $G$ by removing every vertex of $G^j$ non-adjacent to $X_i$. A primeval decomposition $D^*$ of $G^*$ can be obtained from $D$ by removing $i$ and the subtree rooted in $j$ from $T$, and relating $p_i$ either with $X_i$ when $t(p_i) = union$, or with $X_i\cup C$ when $t(p_i) = spider$, where $G^j$ is a spider with empty head and clique set $C$;

\item If $X_i$ is a stable set, we say that $G^*$ is an \emph{s-reduction of $G$} if it is obtained from $G$ by removing every vertex of $X_i$ except one, say $\{v\}$. A primeval decomposition $D^*$ of $G^*$ can be obtained from $D$ by relating $i$ to $\{v\}$. Denote node $i$ by $p$ and note that it is a clique leaf node in the  new decomposition;

\item If $X_i$ is a spider with partition $(C,S)$, we say that $G^* = G - S$ is a \emph{p-reduction of $G$}.  A primeval decomposition $D^*$ of $G^*$ can be obtained from $D$ by changing the label of $p_i$ to $join$.
\end{itemize}

Note that the decomposition $D^*$ may not be a minimal decomposition anymore, but that it is not hard to obtain a minimal decomposition from $D^*$ simply by removing $X_j,X_h$ that have a common parent $p$ and relating $p$ to $X_j\cup X_h$, whenever either $X_j,X_h$ are stable sets and $t(p) = union$, or $X_j,X_h$ are cliques and $t(p)=join$. Therefore, at all times we consider the decomposition being used to be a minimal decomposition.
We also want the reader to remark that the c-reduction and p-reduction are not the same. The main difference is that, in the c-reduction, vertices from $G^j$ are removed, while in the p-reduction, it is  $X_i$ that loses vertices.
If $G^*$ is a c-reduction, or an s-reduction, or a p-reduction of $G$, we say that it is a \emph{reduction of $G$}. Observe that if $G^*$ is a reduction of $G$, then $G^*$ is an induced subgraph of $G$. In the proof, we iteratively reduce the graph. For this, if $G_i$ is the current subgraph being considered, we implicitly deal with a minimal primeval decomposition $D=(T,t,{\cal X})$ of $G_i$ and with a b-coloring $\psi_i$ of $G_i[K_\ell]$, and we say that  a leaf node $j\in V(T)$ \emph{contains color~1} if $1\in \psi_i(X_j[K_\ell])$.

\begin{proof}[of Theorem \ref{thm:P4sparse}]
We construct a sequence of pairs $(G_i,\psi_i)_{i\ge 0}^p$ such that:
\begin{enumerate}
  \item[(i)] $\psi_i$ is a miss-1-b-coloring of $\Gil{i}$ with $k$ colors, for every $i\in\{0,\cdots,p\}$;
  \item[(ii)] For each $i\in \{1,\cdots, p\}$, either $G_i$ is a reduction of $G_{i-1}$, or $G_i = G_{i-1}$ and in $\psi_i$ there is either one less leaf or one less clique leaf containing color 1; and
  \item[(iii)] Either color class 1 is empty in $\psi_p$, or $\omega(G_p[K_\ell]) = k$.
\end{enumerate}

For the reader to better understand Condition~(ii), we mention that sometimes we are able to decrease the number of clique leaves containing color~1, but without decreasing the number of leaves containing color~1. This is because a recoloring is made in a way that color~1 appears in a non-clique leaf that did not contain color~1 before. 

Naturally, we start by setting $(G_0,\psi_0)$ to $(G,\psi)$. We show how to construct the sequence and prove that if $\psi_p$ is a b-coloring of $\Gil{p}$ with $k-1$ colors (which is the case if $\omega(G_p[K_\ell]) < k$), then a b-coloring of $\Gl$ with $k-1$ colors can be obtained. In fact, throughout the construction, whenever $G_{i+1}$ is a reduction of $G_i$, we explain how to obtain a b-coloring of $\Gil{i}$ with $k-1$ colors, given a b-coloring of $\Gil{i+1}$ with $k-1$ colors. This gives us what we need. Because $\Gil{p}\subseteq G[K_\ell]$, we get that if $\omega(\Gil{p}) = k$, then $\chi(\Gl) \ge \omega(\Gl)\ge \omega(\Gil{p}) = k$ and we are done (no b-colorings of $\Gl$ with $k-1$ colors can exist). 

Consider we are at step $i$ of the construction and let $f$ be any leaf of $G_i$ containing color $1$ (if no such leaf exists, we are done since there are no more vertices colored with $1$). If $X_f = V(G_i)$, by the definition of miss-1-b-coloring we get that each color class distinct from $1$ is also non-empty; hence either $\Gil{i}$ is the complete graph with $k$ vertices, in which case we are done, or $G_i$ is a spider with empty head. So suppose the latter occurs and let $(C,S)$ be the partition of $V(G_i)$ such that $C$ is a clique and $S$ is a stable set of $G_i$. Note that $k\ge \ell\lvert C\rvert = \omega(\Gil{i})$, and that $d(u) \le \ell(\lvert C\rvert - 1)+\ell-1 = \ell\lvert C\rvert -1$, for every $u\in S[K_\ell]$. Therefore, if $k> \ell\lvert C\rvert +1$, then no vertex of $S[K_\ell]$ can be a $b^*$-vertex; but then we have at most $\ell\lvert C\rvert < k-1$ $b^*$-vertices, namely the vertices in $C[K_\ell]$, contradicting the fact that $\psi_i$ is a miss-1-b-coloring of $\Gil{i}$. Hence, either $k = \omega(\Gil{i})$ and we are done, or $k=\omega(\Gil{i})+1$ and a b-coloring with $k-1$ colors of $\Gil{i}$ can be easily obtained.


 Therefore, suppose that $X_f\neq V(G_i)$, and let $\odot$ be the label of the parent node of $f$ in $T$. Also, let $G'$ be the subgraph operated with $X_f$ in the construction of $G_i$, and denote by $H$ the graph $X_f\odot G'$. 

 First, suppose that $X_f$ is a stable set, and let $u\in V(X_f)$ be such that $1\in \psi_i(u[K_\ell])$. We suppose that $\lvert V(X_f)\rvert >1$ as otherwise we can treat $X_f$ as a clique. If there exists $c\in \psi_i(X_f[K_\ell])\setminus \psi_i(u[K_\ell])$, we switch colors $1$ and $c$ in $u[K_\ell]$. We can repeat this argument for other vertices of $X_f$ containing color~1. Therefore, we can suppose that $\psi_i(X_f[K_\ell])\subseteq \psi_i(u[K_\ell])$. In this case, we obtain an s-reduction $G_{i+1}$ of $G_i$ by removing every vertex of $X_f\setminus\{u\}$, and we let $\psi_{i+1}$ be $\psi_i$  restricted to $\Gil{i+1}$. Clearly (i) and (ii) hold. Also, if $\gamma$ is a b-coloring of $\Gil{i+1}$ with $k-1$ colors, then by coloring $v[K_\ell]$ with $\gamma(u[K_\ell])$, for each $v\in V(X_f)\setminus\{u\}$, we obtain a b-coloring of $\Gil{i}$ with $k-1$ colors.

Now, suppose that $X_f$ is a spider with empty head. Let $(C,S)$ be the partition of $V(X_f)$ where $C$ is a clique and $S$ is a stable set; denote by $C',S'$ the subsets $C[K_\ell],S[K_\ell]$, respectively. Observe that $\psi_i(G'[K_\ell])\cap \psi_i(C') = \emptyset$. If $1\notin \psi_i(C')$, change the color of every  $u\in S'$ colored with $1$ to $c\in \psi_i(C'\setminus N_{C'}(u))$; such a color exists since $u$ is not complete to $C'$. If $1\in \psi_i(C')$ and there exists $c\in \psi_i(S')\setminus \psi_i(C'\cup G'[K_\ell])$, then switch colors $1$ and $c$ in $X_f[K_\ell]$ and proceed as before. Finally, if $\psi_i(S')\subseteq \psi_i(C'\cup V(G'[K_\ell]))$, then let $G_{i+1}$ be a p-reduction of $G_i$ obtained by removing $S$, and let $\psi_{i+1}$ be equal to $\psi_i$ restricted to $\Gil{i+1}$. One can verify that (i) and (ii) hold. Now, consider $u$ to be any vertex of $G'$ (recall that $V(G')\neq \emptyset$ since $X_f\neq V(G_i)$). If $\gamma$ is a b-coloring of $\Gil{i+1}$ with $k-1$ colors, then a b-coloring of $\Gil{i}$ with $k-1$ colors can be obtained by giving colors $\gamma(u[K_\ell])$ to $v[K_\ell]$ for every $v\in S$.

Now, suppose that $X_f$ is a clique. If $\odot = join$, let $c\in \psi_i(G'[K_\ell])$ and let $\psi_{i+1}$ be obtained by switching colors 1 and $c$ in $H[K_\ell]$. Since $V(H$) is a module and $u$ is complete to $V(H[K_\ell])\setminus\{u\}$ for every $u\in X_f[K_\ell]$, we know that color $c$ cannot lose all of its $b^*$-vertices, and that any $t\in V(\Gil{i})\setminus V(H)$ is adjacent to the same set of colors (i.e., (i) holds). Also, because $D$ is minimal, we know that $G'$ is not a clique; hence $\psi_{i+1}$ has one less clique leaf containing color~1, i.e., (ii) holds for $\psi_{i+1}$. We can therefore suppose that $\odot\neq join$.
Let $N$ denote $N_{G'}(X_f)$, and first suppose that there exists a color $c\neq 1$ that appears in $G'[K_\ell]$ but does not appear in $X_f[K_\ell]\cup N[K_\ell]$. Note that color $c$ cannot appear in $N(H[K_\ell])$ because $V(H)$ is a module in $G_i$. If $\psi_{i+1}$ is obtained by switching colors~$1$ and~$c$ in $X_f[K_\ell]$, then (i) and (ii) hold. 

Finally, suppose that $X_f$ is a clique, $\odot\neq join$, and that $\psi_i(G'[K_\ell])\subseteq \psi_i(X_f[K_\ell]\cup N[K_\ell])$. Note that $\odot\neq join$ implies that $V(G')\setminus N\neq \emptyset$. Let $G_{i+1}$ be a c-reduction of $G_i$ obtained by removing $V(G')\setminus N$, and let $\psi_{i+1}$ be equal to $\psi_i$ restricted to $\Gil{i+1}$. Note that the colors in $V(G'[K_\ell])\setminus N[K_\ell]$ are redundant in $N(t)$, for every $t\in N(H[K_\ell])$. Therefore, we get that $\psi_{i+1}$ is a miss-1-b-coloring of $\Gil{i+1}$. It remains to show that if $\gamma$ is a b-coloring of $\Gil{i+1}$ with $k-1$ colors, then a b-coloring of $\Gil{i}$ with $k-1$ colors can be obtained. If $\odot$ is a spider operation, note that $G'[K_\ell]\setminus N[K_\ell]$ is the union of cliques of size $\ell$. Consider any $u\in V(X_f)$, and give colors $\gamma(u[K_\ell])$ to each such clique. So suppose that $\odot = union$. Because $X_f[K_\ell]$ is a clique and $\psi_i(G'[K_\ell])\subseteq \psi_i(X_f[K_\ell]) = \ell\lvert X_f\rvert$, we get that $\chi(G'[K_\ell])\le \ell\lvert X_f\rvert$ and we can simply optimally color $G'[K_\ell]$ with the colors in $X_f[K_\ell]$.

Because there is a finite number of vertices colored with 1, the previous process eventually stops, i.e., condition (iii) holds for some $p\ge 1$. 
\end{proof}


\section{Chordal graphs}\label{sec:chordal}

It has been proved that chordal graphs are $b$-continuous~\cite{F04}. Here, we investigate some aspects regarding the lexicographic product involving chordal graphs.
We first recall the following result, which gives us the corollary below by applying Corollary~\ref{cor:spectrum}.

\begin{theorem}[\cite{LVS15}]
Let $G$ be a chordal graph and $n$ be any positive integer. Then $G[K_n]$ is chordal. 
\end{theorem}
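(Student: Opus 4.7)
The plan is to establish chordality of $G[K_n]$ via the perfect elimination ordering (PEO) characterization: a graph is chordal if and only if its vertices admit an ordering $v_1,\ldots,v_s$ such that, for every $i$, the neighbors of $v_i$ appearing later in the ordering induce a clique. Since $G$ is chordal, I would fix a PEO $v_1,\ldots,v_s$ of $G$ and build a candidate ordering of $V(G[K_n])$ by concatenating the copies in PEO-order: list all vertices of $v_1[K_n]$ in some arbitrary internal order, then all vertices of $v_2[K_n]$, and so on, producing an ordering of length $sn$.

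Next I would verify the PEO property for a generic vertex $(v_i,y)$. Its later neighbors split into two parts: the remaining vertices of $v_i[K_n]$ (which form a clique, since $v_i[K_n]\cong K_n$), together with the union $\bigcup_{k>i,\,v_iv_k\in E(G)} v_k[K_n]$. The key claim is that the union of these two sets is a clique in $G[K_n]$. Within any single copy $v_k[K_n]$ the clique condition is immediate; between $v_i[K_n]$ and any adjacent-later $v_k[K_n]$ all cross-edges are present by the definition of the lexicographic product; and between two distinct adjacent-later copies $v_k[K_n]$ and $v_{k'}[K_n]$ one needs $v_kv_{k'}\in E(G)$, which holds precisely because $v_k$ and $v_{k'}$ are both later neighbors of $v_i$ in the PEO of $G$. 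This cross-copy step is where the PEO property of $G$ is invoked decisively, and it is the only substantive line in the argument.

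The proof has no real obstacle; the one place requiring care is the cross-copy verification just described, since it is the step that transfers the chordality of $G$ into chordality of $G[K_n]$. As a sanity check I would keep in mind the alternative route, namely, taking an arbitrary chordless cycle of length at least $4$ in $G[K_n]$ and extracting a chord by projecting onto $G$. That route works but forces a mildly unpleasant case analysis on how many distinct first coordinates appear around the cycle and on what happens when the projection collapses into a closed walk of length $2$ or $3$ in $G$. The PEO approach cleanly bypasses all of these cases by reducing the whole question to a one-shot application of the PEO property of $G$.
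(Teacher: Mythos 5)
Your proof is correct. Note that the paper does not actually prove this statement; it is imported from \cite{LVS15} as a cited result, so there is no in-paper argument to compare against. Your perfect-elimination-order construction (concatenating the copies $v_1[K_n],v_2[K_n],\dots$ in PEO order and using simpliciality of $v_i$ in $G[\{v_i,\dots,v_s\}]$ to get the edges between two later copies $v_k[K_n]$ and $v_{k'}[K_n]$) is sound and complete, and it is consistent in spirit with how the paper itself exploits perfect elimination orders of $G$ in Section~\ref{sec:chordal}.
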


\begin{corollary}\label{cor:chordal}
Let $G$ be a chordal graph and $H$ be any b-continuous graph. Then $[\chi(G[H]), \bi(G[K_t])]\subseteq S_b(G[H])$, where $t = \bi(H)$.
\end{corollary}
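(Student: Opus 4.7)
The statement is essentially a direct composition of results already stated in the paper, so my plan is to combine them in the cleanest possible way rather than reprove anything.

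First, I would invoke the theorem from \cite{LVS15} recalled immediately above the corollary: since $G$ is chordal, $G[K_x]$ is chordal for every positive integer $x$. Second, since chordal graphs are known to be b-continuous (the result of \cite{F04} mentioned at the beginning of Section~\ref{sec:chordal}), it follows that $G[K_x]$ is b-continuous for every positive integer $x$. At this point the hypothesis of Corollary~\ref{cor:spectrum} is satisfied: namely, $G[K_x]$ is b-continuous for every positive integer $x$, and $H$ itself is b-continuous by assumption.

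I would then close by applying Corollary~\ref{cor:spectrum} verbatim, which with $t=\bi(H)$ yields
\[
[\chi(G[H]),\bi(G[K_t])]\subseteq S_b(G[H]),
\]
which is exactly the claim. There is no real obstacle in this proof; it is a two-line chain of implications whose whole content lies in the earlier, substantial results (the fact that the class of chordal graphs is closed under $\cdot[K_n]$, the b-continuity of chordal graphs, and the b-homomorphism machinery encapsulated in Corollary~\ref{cor:spectrum}). The only care needed is to state explicitly that the quantifier ``for every positive integer $x$'' in the hypothesis of Corollary~\ref{cor:spectrum} is indeed met, rather than only at the specific value $x=t$.
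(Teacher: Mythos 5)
Your proof is correct and matches the paper's intended argument exactly: the paper derives the corollary by combining the recalled theorem that $G[K_n]$ is chordal for chordal $G$, the b-continuity of chordal graphs from \cite{F04}, and Corollary~\ref{cor:spectrum}. Your explicit remark about verifying the quantifier over all positive integers $x$ is a sensible precaution but does not change the substance.
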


In the next lemma, we further increase the known b-spectrum of $G[H]$, when $G$ is chordal. A \emph{simplicial vertex} is a vertex whose neighborhood is a clique. An order $(v_1,\cdots,v_n)$ of the vertices of a graph $G$ is called a \emph{perfect elimination order} if $v_i$ is a simplicial vertex in $G[\{v_i,\cdots,v_n\}]$, for every $i\in \{1,\cdots,n\}$. It is well known that $G$ is chordal if and only if $G$ admits a perfect elimination order~\cite{FG.65}. Below, $\omega(G)$ denotes the maximum size of a clique in~$G$.

\begin{lemma}
Let $G$ be a chordal graph and $H$ be any graph. If $\psi$ is a b-coloring of $G[H]$ with $k$ colors and $k>n_H\omega(G)$, where $n_H=\lvert V(H)\rvert$, then there exists a b-coloring of $G[H]$ with $k-1$ colors.
\end{lemma}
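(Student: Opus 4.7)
The plan is to eliminate one color, say color $1$, from $\psi$ via local swaps performed on copies $v^*[H]$ corresponding to simplicial vertices of $G$, using the slack provided by $k > n_H\omega(G)$.

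First, I would establish the structural constraint that every b-vertex $(v,h)$ of $\psi$ has $\deg_G(v) \ge \omega(G)$: from
\[
k-1 \le \deg_{G[H]}(v,h) = n_H\deg_G(v) + \deg_H(h) \le n_H(\deg_G(v)+1) - 1
\]
and the hypothesis $k > n_H\omega(G)$, one deduces $n_H(\deg_G(v)+1) > n_H\omega(G)$, whence $\deg_G(v) \ge \omega(G)$. Since simplicial vertices of a chordal graph have degree at most $\omega(G)-1$, no b-vertex of $\psi$ can lie in a copy $v^*[H]$ with $v^*$ simplicial in $G$.

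The main technical step would be a local swap. Pick a simplicial vertex $v^*$ of $G$ whose copy $v^*[H]$ contains color $1$. Because $|N_G[v^*]| \le \omega(G)$, the induced subgraph $(N_G[v^*])[H]$ has at most $n_H\omega(G) < k$ vertices, so some color $c^* \ne 1$ is absent from $(N_G[v^*])[H]$. Recolor every color-$1$ vertex of $v^*[H]$ to $c^*$. Properness is preserved: color-$1$ vertices inside $v^*[H]$ are pairwise non-adjacent in $\psi$, and $c^*$ does not appear in $N_G(v^*)[H]$. With respect to the palette $\{2,\ldots,k\}$, the b-vertex property is maintained: the b-vertex $w_{c^*}$ lies outside $(N_G[v^*])[H]$ and is unaffected; every other b-vertex $w_c$ of $\psi$ only possibly loses some color-$1$ neighbors in $v^*[H]$ (acceptable, since color $1$ is being eliminated), while still seeing $c^*$ via its original $c^*$-neighbor (which must lie outside $(N_G[v^*])[H]$); and any pre-existing b-vertex of color $1$ cannot be in $N_G(v^*)[H]$ when color $1$ is present in $v^*[H]$, so it is untouched.

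I would then iterate this swap along a perfect elimination ordering of $G$, clearing color $1$ from every simplicial copy in succession. The remaining color-$1$ vertices live in non-simplicial copies; each such vertex is either recolored freely (when it is not itself a b-vertex of color $1$, in which case a free color in $\{2,\ldots,k\}$ is available in its neighborhood) or handled by a more global swap argument that exploits the non-clique structure of the neighborhood of its $G$-coordinate together with $k > n_H\omega(G)$. The main obstacle, which I expect to occupy the bulk of the write-up, is precisely this final clean-up for color-$1$ b-vertices residing in non-simplicial copies: the basic palette-counting of the local swap may fail there because $|(N_G[v])[H]|$ can exceed $k$, and one must extract a free target color from a more global source, leveraging the chordal structure of $G$ (for instance via a clique tree) in combination with the key degree bound forced on any such remaining b-vertex.
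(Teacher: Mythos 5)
There is a genuine gap, and it sits exactly at the step you defer. Your preliminary observations are correct: the degree computation $\deg_{G[H]}(v,h)=n_H\deg_G(v)+\deg_H(h)$ together with $k>n_H\omega(G)$ does force $\deg_G(v)\ge\omega(G)$ for every b-vertex, so no b-vertex lies in a simplicial copy, and the counting $\lvert N_G[v^*]\rvert\le\omega(G)$ does yield a free color near a simplicial vertex. But this makes your local-swap machinery largely redundant: since color-$1$ vertices in simplicial copies are never b-vertices, each of them already misses some color $c\neq 1$ in its neighborhood and can be recolored by the standard simultaneous greedy argument (color classes are independent sets, so no conflicts arise). The entire content of the lemma is the case you leave open --- a b-vertex of the color being eliminated sitting in a non-simplicial copy. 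Such a vertex sees every other color, so no ``free target color'' extracted from a clique tree can be assigned to it directly; the only way forward is to destroy its b-vertex status, or to eliminate a different color. Your sketch does not say how to do either, and fixing the target color in advance is precisely what makes this hard.

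The paper's proof resolves this by \emph{not} choosing the color to eliminate beforehand. It walks a perfect elimination order $(v_1,\dots,v_n)$, restricting $\psi$ to $G_i[H]=G[\{v_i,\dots,v_n\}][H]$, and takes the first index $i$ at which the restriction to $G_{i+1}[H]$ stops being a b-coloring with $k$ colors (such an $i$ exists since $k>n_H$). At that point some color $c$ has all of its b-vertices inside a single copy $v_j[H]$ with $v_j\in N(v_i)$ (using that $v_i$ is simplicial), and all of them depend on a color $c'$ occurring only in $v_i[H]$. Swapping $c\leftrightarrow c'$ in $v_j[H]$ and $c'\leftrightarrow c''$ in $v_i[H]$ --- where $c''$ is a color absent from $v_i[H]\cup N_{G_i[H]}(v_i[H])$, guaranteed by $k>n_H\omega(G)$ --- strips color $c$ of all its b-vertices; $c$ is then eliminated greedily, and the coloring is extended back down the elimination order using the same counting bound. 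To complete your argument you would need essentially this idea: identify which color is ``vulnerable'' from the structure of the elimination order, rather than insisting on removing color $1$.
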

\begin{proof}
Let $(v_1,\cdots,v_n)$ be a perfect elimination order of $G$, and for each $i$ denote by $G_i$ the subgraph $G[\{v_i,\cdots,v_n\}]$, and by $\psi_i$ the coloring $\psi$ restricted to $G_i[H]$. Also, for each color $c$ and index $i$, denote by $B_{i,c}$ the set of b-vertices of color $c$ in $\psi_i$. Now, let $i$ be minimum such that $\psi_{i+1}$ is not a b-coloring of $G_{i+1}[H]$ with $k$ colors. This means that there exists a color $c$ such that $B_{i,c}\subseteq N_{G_i[H]}(v_i[H])$, and a color $c'\in \psi_i(v_i[H])$ such that every neighbor of $(v_j,u)$ colored with $c'$ is contained in $v_i[H]$, for every $(v_j,u)\in B_{i,c}$. Also, because $v_i$ is simplicial in $G_i$, we get that in fact  there exists $v_j\in N_{G_i}(v_i)$ such that $B_{i,c}\subseteq v_j[H]$. Finally, since $k>n_H\omega(G)$ and $N_{G_i}(v_i)$ is a clique, there exists a color $c''$ that does not appear in $v_i[H]\cup N_{G_i[H]}(v_i[H])$. 

Now, switch colors $c$ and $c'$ in $v_j[H]$, and colors $c'$ and $c''$ in $v_i[H]$. Because $B_{i,c}\subseteq v_j[H]$, we know that there are no remaining b-vertices in color class $c$; so let $\psi'$ be obtained by changing the color of each $x$ colored with $c$ to any color that does not appear in its neighborhood. By the choice of colors, one can verify that $\psi'$ is a b-coloring of $G_i[H]$ with $k-1$ colors. Finally, for $\ell$ equal to $i-1$ down to $1$, because $N_{G_\ell}(v_\ell)$ is a clique, we know that at most $n_H(\omega(G)-1)$ colors appear in $N_{G_\ell[H]}(v_\ell[H])$. Therefore, since $k>n_H\omega(G)$, there are at least $n_H$ colors with which we can color $v_\ell[H]$.
\end{proof}

\begin{corollary}
Let $G$ be a chordal graph and $H$ be a graph with $n_H$ vertices. If $\bi(G[H])\ge n_H\chi(G)$, then $[n_H\chi(G),\bi(G[H])]\subseteq S_b(G[H])$.
\end{corollary}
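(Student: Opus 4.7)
The plan is to derive the corollary as a direct iterated consequence of the preceding lemma, exploiting the fact that chordal graphs are perfect.

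First I would record the key identity: since $G$ is chordal, $\chi(G)=\omega(G)$, so the quantity $n_H\chi(G)$ appearing in the corollary coincides with the threshold $n_H\omega(G)$ that governs the hypothesis of the lemma. This is the bridge that lets us apply the lemma with the exact bound we want.

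Next I would set up the descent argument. By definition of the b-chromatic number there exists a b-coloring of $G[H]$ using $\bi(G[H])$ colors, so $\bi(G[H])\in S_b(G[H])$. I would then argue by downward induction on $k$: assuming $G[H]$ admits a b-coloring with $k$ colors for some $k$ with $n_H\chi(G)<k\le \bi(G[H])$, the lemma (applied with the threshold $n_H\omega(G)=n_H\chi(G)<k$) produces a b-coloring of $G[H]$ with $k-1$ colors, so $k-1\in S_b(G[H])$ as well. Iterating this step yields membership in $S_b(G[H])$ for every integer from $\bi(G[H])$ down to and including $n_H\chi(G)$, which is exactly the interval $[n_H\chi(G),\bi(G[H])]$ claimed in the statement.

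Honestly there is no serious obstacle: all the work is packed into the preceding lemma, and the only thing one has to check is that the lemma's strict inequality $k>n_H\omega(G)$ is compatible with stopping at $k=n_H\chi(G)$ rather than overshooting. The closest the argument comes to subtlety is precisely this endpoint check—applying the lemma with $k=n_H\chi(G)+1$ produces the bottom value $n_H\chi(G)$, while at $k=n_H\chi(G)$ the lemma no longer applies, so the descent terminates exactly where the corollary asks. Using $\chi(G)=\omega(G)$ for chordal $G$ ensures the corollary's threshold matches the lemma's threshold, so no additional case analysis is needed.
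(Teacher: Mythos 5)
Your proof is correct and is exactly the intended derivation: the paper states this corollary without proof as an immediate consequence of the preceding lemma, and your argument---using $\chi(G)=\omega(G)$ for chordal $G$ to match the lemma's threshold and then descending from $\bi(G[H])$ one color at a time---is the standard way to fill it in. The endpoint check you perform (last application at $k=n_H\chi(G)+1$) is the only point needing care, and you handle it correctly.
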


\vspace{-0.4cm}
\bibliography{bibliografia}
\bibliographystyle{plain}

\end{document}